\DocumentMetadata{%
	lang=en-US,
	pdfstandard	= A-3u,
	pdfversion	= 1.7,
}

\documentclass[balance,nocopyright,upint,varvw,hyphenate,barcolor=Goldenrod3,german,nolists]{asmejour}%
\allowdisplaybreaks 

\hypersetup{%
	pdfauthor={John H. Lienhard},                       		   	
	pdftitle={ASME Journal Paper LaTeX Template},                  	
	pdfkeywords={ASME journal paper, LaTeX template, BibTeX style, asmejour class},
	pdfsubject = {Describes the asmejour LaTeX template},			
}

\JourName{ASME Letters in Dynamic Systems}

\newtheorem{theorem}{Theorem}
\newtheorem{problem}{Problem}
\newtheorem{assumption}{Assumption}
\newtheorem{definition}{Definition}

\usepackage{eso-pic}

\newcommand{\AuthorPreprintNotice}{%
  \AddToShipoutPictureFG*{%
    \AtPageLowerLeft{%
      \raisebox{10mm}{%
        \hspace*{15mm}%
        \parbox{180mm}{%
          \footnotesize
          \textit{Author preprint notice: This manuscript is the author-prepared preprint version originally submitted to the MECC--ALDSC joint review process. It has not been revised after the editorial decision, copyedited, typeset, assigned final publication metadata, or published by ASME. The final version of record, if published, will be available through the ASME Digital Collection.}%
        }%
      }%
    }%
  }%
}

\begin{document}
\AuthorPreprintNotice




\SetAuthorBlock{Yichao Wang$^\dagger$}{Department of Electrical and Computer Engineering,\\
	University of Connecticut, Storrs, CT 06269, USA.\\
	Email: yichao.wang@uconn.edu.
}

\SetAuthorBlock{Sameeha Tasneem$^\dagger$}{Department of Electrical and Computer Engineering,\\
	University of Connecticut, Storrs, CT 06269, USA.\\
	Email: lae25003@uconn.edu.
}

\SetAuthorBlock{Mohamadamin Rajabinezhad}{Department of Electrical and Computer Engineering,\\
	University of Connecticut, Storrs, CT 06269, USA.\\
	Email: mohamadamin.rajabinezhad@uconn.edu.
}

\SetAuthorBlock{Jinfeng Chen}{Department of Engineering Technology, University of Houston, Houston, TX 77004, USA.
   Email: particlefilter2012@gmail.com.
}

\SetAuthorBlock{Qin Lin}{Department of Engineering Technology, University of Houston, Houston, TX 77004, USA.
   Email: qlin21@central.uh.edu.
} 

\SetAuthorBlock{Kai Wang}{Connecticut Transportation Safety Research Center and Connecticut Transportation Institute, University of Connecticut, Storrs, CT 06269, USA.
   Email: kai.wang@uconn.edu
}

\SetAuthorBlock{Shan Zuo\CorrespondingAuthor}{Department of Electrical and Computer Engineering,\\
	University of Connecticut, Storrs, CT 06269, USA.\\
	Email: shan.zuo@uconn.edu.
}

\title{Resilient Control Lyapunov Function-based Quadratic Program for Quadrotors Under Cyberattacks}

\keywords{Resilient control, constrained quadratic program, extended state observer, attack injection.}

\begin{abstract}
Ensuring the operational safety of quadrotors under partial actuator failures, lumped external disturbances, and malicious cyberattacks is a critical challenge due to the system's underactuated and highly nonlinear nature. Building on the existing result of a fault-tolerant control approach for a quadrotor experiencing a complete loss of two opposing rotors \cite{chen2024quadrotor}, this letter further addresses the additional challenge of malicious cyberattacks, which could be unknown and unbounded. While the baseline control law, rooted in proportional-derivative (PD) feedback and observer-based decoupling, effectively handles mismatched disturbances, it remains vulnerable to maliciously injected cyberattacks on the pseudo-control channels. To address this, a Resilient Control Lyapunov Function-based Quadratic Program (RCLF-QP) is developed, where a resilient compensational term with real-time online adaptation is designed in the conventional CLF to compensate for the maliciously injected unknown and unbounded attacks. Compared with the PD feedback control, the proposed QP-based constrained optimization control framework provides a systematic and
extensible framework that allows new control objectives and
constraints to be seamlessly integrated without altering the
underlying stability guarantees. The overall proposed controller integrates a model-based extended state observer with the proposed RCLF-QP mechanism to mitigate both lumped disturbances caused by aerodynamics and strong wind, and adversarial cyberattacks injected by malicious adversaries. Simulations in a high-fidelity environment demonstrate that the proposed RCLF-QP control architecture prevents trajectory divergence and system instability in scenarios where the baseline controller fails in maintaining the stability of Quadrotors under malicious attacks.
\end{abstract}

\date{}

\maketitle 
\revfootnote{$^\dagger$These authors contributed equally to this work.}


\section{Introduction} 
Operating in uncertain and adverser environments, the operational safety of quadrotors is increasingly threatened by the convergence of physical actuator failures, disturbances caused by aerodynamics and strong wind, and cyberattacks injected by malicious adversaries\cite{wu2025attack}. While traditional fault-tolerant control (FTC) effectively manages rotor loss, current research is shifting toward cyber-resilience to ensure continuity in the presence of sophisticated threats like false data injection attacks (FDIA) \cite{zhang2025distributed}. Significant progress has been made in handling rotor failures through methods like incremental nonlinear dynamic inversion (INDI) \cite{sun2021incremental}, robust adaptive sliding mode control \cite{wu2025robust}, and reinforcement learning-based approaches \cite{liu2024reinforcement}. A foundational advancement was established by Sun et al. \cite{sun2021incremental}, which proved the feasibility of maintaining flight despite the complete loss of two opposing rotors. Building upon this, Chen et al. \cite{chen2024quadrotor} introduced a model-based extended state observer (MB-ESO) for mismatched disturbance rejection, which utilizes system structural information to neutralize disturbances that do not enter through the control channels.

Despite these advances, most current methodologies, including the aforementioned baseline MB-ESO framework, rely on linear or proportional-derivative (PD) feedback loops for inner-loop tracking \cite{chen2024quadrotor}. While these approaches excel in handling environmental noise or bounded disturbances, they exhibit significant vulnerabilities when encountering adversarial FDIA. Specifically, conventional estimation-based defenses are often designed to isolate and compensate for disturbances within the feedback channel. However, when FDIA are injected directly into the pseudo-control input, such estimation methods may fail to provide the necessary regulation because the targets of their estimation are distinct from the point of attack. Recent state-of-the-art research has explored game-theoretical distributed formations \cite{geng2025resilient} and fixed-time state observers \cite{ranjan2025fixed} to enhance robustness. In \cite{shan2025co}, an observer-based gain-scheduling tracking control scheme is introduced for active suspension systems under malicious attacks. The work utilizes interval type-2 fuzzy rules and a co-design method to ensure $H_\infty$ performance and exponential stability in the presence of uncertainties and network-level threats. In addition to these methods, Control Barrier Functions (CBFs) have emerged as a powerful tool for enforcing safety constraints in the presence of actuator faults. For instance, in \cite{li2025safety}, a safety-critical control strategy for quadrotor UAVs is studied considering actuator faults and output constraints. The authors introduce a barrier Lyapunov function-based backstepping framework and a composite estimation module to jointly compensate for external disturbances and actuator malfunctions. In \cite{agrawal2022safe}, the synthesis of safe and robust observer-controller interconnections is studied for nonlinear systems with partial state information. The authors introduce novel CBFs based on input-to-state stable and bounded error observers to establish quadratic program-based controllers that certify safety despite measurement noise and bounded disturbances. In \cite{ye2026safety}, a safety-critical Model Predictive Control framework is studied for quadrotor UAVs operating in confined environments subject to compound disturbances and measurement errors. The authors introduce a refined disturbance observer to estimate wind and blade-induced airflow, alongside a measurement-robust tunable CBF that utilizes error upper bounds to improve system safety margins. In \cite{song2024safety}, the safety-critical fixed-time formation control of quadrotor UAVs is studied under disturbances and obstacle collision risks. The authors introduce a fixed-time distributed observer and a sliding mode-based disturbance observer, integrated with robust exponential CBFs, to ensure formation tracking and obstacle avoidance for systems with high relative degree. However, many of these frameworks assume that disturbances remain within specific bounds, leaving the system susceptible to catastrophic failure when unbounded FDIA lead to control command saturation or exceed the observer's bandwidth \cite{wu2025attack}.

The primary significance of this work lies in the development of a resilient control framework that enhances traditional tracking through a constrained optimization approach. This architecture provides two distinct advantages. First, by formulating the tracking objective as a Quadratic Program (QP) problem, we establish a versatile and flexible control structure. This optimization-based framework allows for the seamless integration of auxiliary control objectives. Second, we integrate a resilient compensational term directly into the RCLF-QP formulation with online real-time adapation. This term is specifically designed to dynamically neutralize the adverse effects of malicious attack injections. By treating the adversarial influence as an active perturbation within the stability constraints, the controller adaptively generates a compensational term that maintains the system's stability margin. This ensures that the quadrotor preserves trajectory integrity even under cyberattacks where state-of-the-art INDI and MB-ESO methods fail achieve the control objective.


\section{Preliminaries on Baseline Control Framework and Attack-Resilient Problem Formulation}\label{sec:model}

\textit{Notation.} Throughout this paper, boldface symbols such as $\boldsymbol{P}$ denote vectors or matrices, whereas non-boldface symbols like $p$ represent scalars. The superscripts $[\cdot]^I$ and $[\cdot]^B$ are used to indicate coordinates expressed in the inertial and body frames, respectively. Subscripts are employed to specify particular relationships or attributes; for instance, $\boldsymbol{x}_I$ represents the $x$-axis within the inertial frame. Additionally, $0_{m\times n}$ denotes an $m \times n$ zero matrix, and $\boldsymbol{I}_{m\times m}$ denotes an identity matrix of dimension $m$.

\vspace{-3mm}

\subsection{Quadrotor Dynamic Model}
Let $\mathcal{F}_\mathcal{I}=\{O_I, \boldsymbol{x}_I, \boldsymbol{y}_I, \boldsymbol{z}_I\}$ denote a right-hand inertial frame fixed to the ground, where $\boldsymbol{x}_I$, $\boldsymbol{y}_I$, and $\boldsymbol{z}_I$ point north, east, and downward, respectively \cite{sun2021incremental}. As illustrated in Fig. \ref{fig: bodyframe}, the right-hand body frame $\mathcal{F}_\mathcal{B}=\{O_B, \boldsymbol{x}_B, \boldsymbol{y}_B, \boldsymbol{z}_B\}$ is fixed to the quadrotor with its origin $O_B$ located at the center of mass; the axes $\boldsymbol{x}_B$, $\boldsymbol{y}_B$, and $\boldsymbol{z}_B$ point forward, right, and downward, respectively.

\begin{figure}[htbp]
\centering
\includegraphics[width=1.8in]{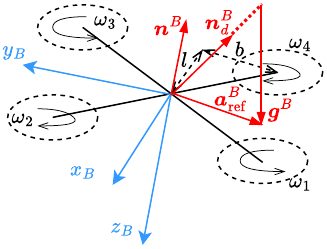} 
\caption{Visualizations of quadrotor and reduced attitude\cite{chen2024quadrotor}.}
\label{fig: bodyframe}
\end{figure}

The full nonlinear dynamical model is defined as follows 
\begin{subequations}\label{eq: quadrotor_nonlinear_dynamics}
    \begin{align}
    \dot{\boldsymbol{P}^I} &= \boldsymbol{V}^I \label{eq: translation_kinematics}\\
    m_v \dot{\boldsymbol{V}^I} &= m_v \boldsymbol{g}^I + \boldsymbol{R}_{IB}\boldsymbol{F}^B \label{eq: translation_dynamics}\\
    \dot{\boldsymbol{R}}_{IB} &= \boldsymbol{R}_{IB}\boldsymbol{\Omega}_{\times}^B \label{eq: rotation_kinematics}\\
    \boldsymbol{I}_{\boldsymbol{v}} \dot{\boldsymbol{\Omega}}^B &= -\boldsymbol{\Omega}_{\times}^B \boldsymbol{I}_{\boldsymbol{v}}\boldsymbol{\Omega}^B + \boldsymbol{M}^B \label{eq: rotation_dynamics}
    \end{align}
\end{subequations}
where \eqref{eq: translation_kinematics} and \eqref{eq: translation_dynamics} describe the translational dynamics, while \eqref{eq: rotation_kinematics} and \eqref{eq: rotation_dynamics} characterize the rotational dynamics. Here, $\boldsymbol{P}^I=[X, Y, Z]^T$, $\boldsymbol{V}^I=[V_x, V_y, V_z]^T$, and $\boldsymbol{g}^I=[0, 0, g]^T$ are the position, velocity, and gravity vectors in $\mathcal{F}_\mathcal{I}$ \cite{sun2021incremental}. The angular velocity vector in $\mathcal{F}_{\mathcal{B}}$ is defined as $\boldsymbol{\Omega}^B=[p, q, r]^T$, which includes the roll, pitch, and yaw rates. The rotation matrix from $\mathcal{F}_{\mathcal{B}}$ to $\mathcal{F}_{\mathcal{I}}$ is denoted by $\boldsymbol{R}_{IB}$, and $\boldsymbol{\Omega}^B_{\times}$ is the skew-symmetric matrix such that $\boldsymbol{\Omega}^B_{\times}\boldsymbol{a}=\boldsymbol{\Omega}^B\times \boldsymbol{a}$ for any vector $\boldsymbol{a}\in \mathbb{R}^3$ \cite{sun2021incremental}. The quadrotor's gross mass and inertia matrix are represented by $m_v$ and $\boldsymbol{I}_{\boldsymbol{v}}$, respectively.

The resultant force $\boldsymbol{F}^B$ and moment $\boldsymbol{M}^B$ acting on the center of mass in $\mathcal{F}_{\mathcal{B}}$ are adopted from \cite{sun2021incremental}.
The terms $\boldsymbol{F}_a$ and $\boldsymbol{M}_a$ represent lumped disturbances, which encompass model uncertainties, unmodeled aerodynamics, and adversarial signal injections \cite{sun2021incremental}.

\subsection{Reduced Attitude Control}
Under rotor failure conditions, the yaw angle becomes uncontrollable. Consequently, the reduced attitude control strategy \cite{sun2021incremental, mueller2016relaxed} simplifies the standard attitude control problem to a thrust vector pointing task, intentionally sacrificing full yaw regulation. As shown in Fig. \ref{fig: bodyframe}, $\boldsymbol{n}^B$ is a unit vector fixed to the body frame that serves as the rotation axis for the compromised quadrotor and aligns with the direction of the average thrust. To maintain position control, $\boldsymbol{n}^B$ must be aligned with the desired unit vector $\boldsymbol{n}_d^B$. This reference vector is initially generated as $\boldsymbol{n}_d^I$ in the inertial frame $\mathcal{F}_\mathcal{I}$ by the outer-loop controller and then transformed into the body frame $\mathcal{F}_\mathcal{B}$ \cite{sun2021incremental}. 

Since $\boldsymbol{n}_d^B$ is a unit vector with only two independent degrees of freedom, the standard rotation kinematics in \eqref{eq: rotation_kinematics} are replaced by the first two rows of the following relaxed attitude kinematic equations \cite{sun2021incremental}:
\begin{equation}\label{eq: reduced_attitude_kinematics}
    \begin{bmatrix}
        \dot{h}_1\\
        \dot{h}_2\\
        \dot{h}_3
    \end{bmatrix}=\begin{bmatrix}
        0 & r & -q\\
        -r & 0 & p\\
        q & -p & 0
    \end{bmatrix}\begin{bmatrix}
        h_1\\
        h_2\\
        h_3
    \end{bmatrix}+\begin{bmatrix}
        \lambda_1\\
        \lambda_2\\
        \lambda_3
    \end{bmatrix}
\end{equation}
where $\boldsymbol{n}_d^B=[h_1, h_2, h_3]^T$ and $\boldsymbol{R}_{IB}^T\dot{\boldsymbol{n}}_d^I=[\lambda_1, \lambda_2, \lambda_3]^T$ represents a disturbance vector, with $\boldsymbol{n}_d^I$ being the representation of $\boldsymbol{n}_d^B$ in $\mathcal{F}_\mathcal{I}$. For a quadrotor experiencing the complete loss of two opposing rotors, $\boldsymbol{n}^B=[0, 0, -1]^T$ is typically selected to maximize energy efficiency. To successfully align $\boldsymbol{n}^B$ with $\boldsymbol{n}_d^B$, both $h_1$ and $h_2$ must be stabilized to zero.

\subsection{Baseline Control Framework}\label{sec: proposed_framework}
To provide a clear analytical foundation for the design of the MB-ESO based inner-loop controller, the quadrotor's dynamics are represented as a control-affine nonlinear system influenced by a disturbance vector $\boldsymbol{w}$ \cite{sun2021incremental}:
\begin{equation}\label{eq: nonlinear_system_dynamics}
    \begin{cases}
        \dot{\boldsymbol{x}} = \boldsymbol{f}(\boldsymbol{x}) + \boldsymbol{G}(\boldsymbol{x})\boldsymbol{u} + \boldsymbol{Q}(\boldsymbol{x})\boldsymbol{w} \\
        \boldsymbol{y} = \boldsymbol{H}(\boldsymbol{x})
    \end{cases}
\end{equation}
In this model, the state vector is defined as $\boldsymbol{x}=[Z, V_z, h_1, h_2, p, q, r]^T$, while the control input vector is $\boldsymbol{u}=[\omega_1^2, \omega_2^2, \omega_3^2, \omega_4^2]^T$. The associated vector fields and matrices $\boldsymbol{f}(\boldsymbol{x}),\boldsymbol{G}(\boldsymbol{x}),\boldsymbol{Q}(\boldsymbol{x}),\:\text{and}\:\boldsymbol{H}(\boldsymbol{x})$ characterizing the system are adopted from \cite{sun2021incremental}.
The vector $\boldsymbol{w}=[F_{a,z}, \lambda_1, \lambda_2, M_{a,x}, M_{a,y}, M_{a,z}]^T$ serves as a generalized disturbance term that accounts for both environmental factors and potential adversarial interference. Here, $\boldsymbol{I}_{\boldsymbol{v}}=\text{diag}(I_x, I_y, I_z)$ denotes the diagonal inertia matrix, and $R_{33}$ represents the element in the third row and third column of the rotation matrix $\boldsymbol{R}_{IB}$. The model assumes that the gyroscopic moments produced by the rotors are negligible.

It is assumed that the system's full state is either directly measurable or can be reliably estimated. Since the quadrotor operates with only two remaining functional rotors, it is limited to the control of a maximum of two independent outputs. 
For the purposes of this investigation, we focus on the specific configuration where only rotors $2$ and $4$ are operational, implying $\omega_1^2 = 0$ and $\omega_3^2 = 0$.

The primary objective of the baseline control (MB-ESO) methods \cite{chen2024quadrotor} is to explore how to fully leverage available sensor data and structural model information to improve the estimation accuracy of the disturbance $\boldsymbol{w}$. Such high-fidelity estimation is essential for enhancing the performance and resilience of the two-input, two-output nonlinear system \eqref{eq: nonlinear_system_dynamics} when it is subjected to both environmental disturbances and malicious signal injections.
The baseline control architecture consists of an outer-loop horizontal position controller and an inner-loop altitude and attitude controller designed for resilience.


\subsubsection{Controller Design for Outer-Loop Horizontal Position}
To ensure a consistent and objective comparison with the INDI methodology \cite{sun2021incremental}, our framework utilizes an identical outer-loop PID control structure. This controller is responsible for generating the desired rotation axis $\boldsymbol{n}_d^I$ within the inertial frame $\mathcal{F}_{\mathcal{I}}$, which is determined as:
\begin{equation}\label{eq: desired_thrust_vector}
    \boldsymbol{n}_d^I=\dfrac{\boldsymbol{a}_{\text{ref}}^I-\boldsymbol{g}^I}{||\boldsymbol{a}_{\text{ref}}^I-\boldsymbol{g}^I||}
\end{equation}
where the reference acceleration vector $\boldsymbol{a}_{\text{ref}}^I$ is defined as:
\begin{equation}\nonumber\label{eq: reference_acceleration_definition}
    \boldsymbol{a}_{\text{ref}}^I=\begin{bmatrix}
        -k_p e_x-k_d \dot{e}_x -k_i \int e_x dt\\
        -k_p e_y-k_d \dot{e}_y -k_i \int e_y dt\\
        \ddot{Z}_{\text{ref}}
    \end{bmatrix}.
\end{equation}
In these expressions, $e_x=X-X_{\text{ref}}$ and $e_y=Y-Y_{\text{ref}}$ denote the horizontal position errors in the inertial frame $\mathcal{F}_{\mathcal{I}}$. The terms $k_p$, $k_i$, and $k_d$ represent the tunable PID gains, while $X_{\text{ref}}$, $Y_{\text{ref}}$, and $Z_{\text{ref}}$ correspond to the coordinates of the target reference trajectory.

\subsubsection{Controller Design for Inner-Loop Altitude and Attitude}
The inner-loop control architecture is developed utilizing an MB-ESO framework combined with disturbance decoupling to regulate the quadrotor's altitude and reduced attitude. This design ensures stability even in the presence of two opposing rotor failures and significant lumped disturbances.

\textit{Design of MB-ESO:} The primary objective of the MB-ESO is to maximize the utilization of sensor measurements and structural model knowledge to achieve high-fidelity estimation ($\boldsymbol{\hat w}$) of the disturbances ($\boldsymbol{w}$) within the dynamic model. Following the theoretical foundation in \cite{chen2023a}, the necessary and sufficient conditions for the existence of an MB-ESO are: 1) the system must be observable, and 2) there must be no invariant zeros located between the output and the disturbance. 

For the system defined in \eqref{eq: nonlinear_system_dynamics} with disturbance vector $\boldsymbol{w}$, the observer output vector is selected as $\boldsymbol{y}_o = [Z, h_1, h_2, p, q, r]^T$. It is verifiable that system \eqref{eq: nonlinear_system_dynamics} with output $\boldsymbol{y}_o$ can be transformed into a normal form characterized by a disturbance relative degree of $\{2,1,1,1,1,1\}$. The total degree sums to 7, which corresponds to the number of system states. Consequently, the system is observable and lacks zero dynamics between $\boldsymbol{w}$ and $\boldsymbol{y}_o$ \cite{isidori1995nonlinear}.


By defining each element of the disturbance vector $\boldsymbol{w}$ as an auxiliary state, the details of the six specialized MB-ESOs can be found in \cite{chen2024quadrotor}.

\textit{Control Design for Mismatched Disturbances:} To ensure the quadrotor follows the target trajectory $[Z_{\text{ref}}, h_{1, \text{ref}}\cos{\chi}+h_{2,\text{ref}}\sin{\chi}]^T$, the influence of the disturbance $\boldsymbol{w}$ on system \eqref{eq: nonlinear_system_dynamics} must be mitigated. However, because $\boldsymbol{w}$ enters the system through a different channel than the control input $\boldsymbol{u}$, it is classified as a mismatched disturbance and cannot be canceled directly. This study adopts the disturbance decoupling methodology to reject these mismatched terms \cite{yang2012nonlinear}.

System \eqref{eq: nonlinear_system_dynamics} is transformed into a normal form through the coordinate transformation $\boldsymbol{\Phi}(\boldsymbol{x}) = [\boldsymbol{\eta}, \boldsymbol{\xi}]^T$ \cite{isidori1995nonlinear}, where $\boldsymbol{\eta}$ represents the zero dynamics states \cite{sun2021incremental}, and:
\begin{equation}\label{eq: coordinate_transformation_xi}
\begin{gathered}
\boldsymbol{\xi} = [\xi_1, \xi_2, \xi_3, \xi_4]^T \\
= [H_1(\boldsymbol{x}), L_{\boldsymbol{f}}H_1(\boldsymbol{x}), H_2(\boldsymbol{x}), L_{\boldsymbol{f}}H_2(\boldsymbol{x})]^T \hfill
\end{gathered}
\end{equation}
In this expression, $H_i(\boldsymbol{x})$ is the $i$-th component of the output mapping $\boldsymbol{H}(\boldsymbol{x})$, and $L_{\boldsymbol{f}}H_i(\boldsymbol{x})$ denotes the first-order Lie derivative of $H_i(\boldsymbol{x})$ along the vector field $\boldsymbol{f}(\boldsymbol{x})$. Given that the zero dynamics are stabilizable through the proper selection of $\chi$ and that the control input $\boldsymbol{u}$ does not shift the transmission zeros, the control law is designed using the following subsystem \cite{yang2012nonlinear}:
\begin{equation}\label{eq: normal_form_subsystem}
\begin{gathered}
\dot{\boldsymbol{\xi}} = \boldsymbol{A}_c \boldsymbol{\xi} + \boldsymbol{B}_c[\boldsymbol{\alpha}(\boldsymbol{x}) + \boldsymbol{\mathcal{B}}(\boldsymbol{x})u] + \boldsymbol{D}_c(\boldsymbol{x})\boldsymbol{w} \hfill \\
\boldsymbol{y} = \boldsymbol{C}_c \boldsymbol{\xi} \hfill
\end{gathered}
\end{equation}
where $u = [\omega_2^2, \omega_4^2]^T$ represents the functional rotor inputs, and the system matrices are defined as:
\begin{equation*}
    \boldsymbol{A}_c=\begin{bmatrix} 0 & 1 & 0 & 0 \\ 0 & 0 & 0 & 0 \\ 0 & 0 & 0 & 1 \\ 0 & 0 & 0 & 0 \end{bmatrix}, \boldsymbol{B}_c=\begin{bmatrix} 0 & 0 \\ 1 & 0 \\ 0 & 0 \\ 0 & 1 \end{bmatrix}, \boldsymbol{C}_c=\begin{bmatrix} 1 & 0 \\ 0 & 0 \\ 0 & 1 \\ 0 & 0 \end{bmatrix}^T
\end{equation*}
The term $\boldsymbol{D}_c(\boldsymbol{x})$ explicitly characterizes the impact of disturbances on the output channels through Lie derivatives along the vector fields $Q_1, \dots, Q_6$:
\begin{equation*}
    \boldsymbol{D}_c(\boldsymbol{x})=\begin{bmatrix} 
        L_{Q_1}H_1 & L_{Q_2}H_1 & \cdots & L_{Q_6}H_1 \\ 
        L_{Q_1}L_{\boldsymbol{f}}H_1 & L_{Q_2}L_{\boldsymbol{f}}H_1 & \cdots & L_{Q_6}L_{\boldsymbol{f}}H_1 \\ 
        L_{Q_1}H_2 & L_{Q_2}H_2 & \cdots & L_{Q_6}H_2 \\ 
        L_{Q_1}L_{\boldsymbol{f}}H_2 & L_{Q_2}L_{\boldsymbol{f}}H_2 & \cdots & L_{Q_6}L_{\boldsymbol{f}}H_2 
    \end{bmatrix}
\end{equation*}
The remaining control-related terms are formulated as:
\begin{equation*}
\begin{gathered}
    \boldsymbol{\alpha}(\boldsymbol{x})=\begin{bmatrix} L_{\boldsymbol{f}}^2 H_1 \\ L_{\boldsymbol{f}}^2 H_2 \end{bmatrix}, \quad \boldsymbol{\mathcal{B}}(\boldsymbol{x})=\begin{bmatrix} L_{G_2}L_{\boldsymbol{f}}H_1 & L_{G_4}L_{\boldsymbol{f}}H_1 \\ L_{G_2}L_{\boldsymbol{f}}H_2 & L_{G_4}L_{\boldsymbol{f}}H_2 \end{bmatrix} \hfill
\end{gathered}
\end{equation*}
To neutralize the influence of the disturbance term $\boldsymbol{D}_c(\boldsymbol{x})\boldsymbol{w}$, the decoupling control law is defined as follows \cite{yang2012nonlinear}:
\begin{equation}\label{eq: decoupling_control_law}
\begin{gathered}
u = \boldsymbol{\mathcal{B}}^{-1}(\boldsymbol{x})[-\boldsymbol{\alpha}(\boldsymbol{x}) + u_0 + \boldsymbol{\Gamma}(\boldsymbol{x})\boldsymbol{\hat w}] \hfill
\end{gathered}
\end{equation}
where $u_0 = [-c_1^1 (\xi_1-Z_{\text{ref}}) -c_2^1 \xi_2, -c_1^2 \xi_3 -c_2^2 \xi_4]^T$ represents the baseline pseudo control component, which is designed based on state-feedback PD control. $\boldsymbol{\hat w}$ is the estimation of $\boldsymbol{w}$ using the MB-ESO in \cite{chen2024quadrotor}. The decoupling matrix $\boldsymbol{\Gamma}(\boldsymbol{x})$ is explicitly formulated to suppress the direct impact of $\boldsymbol{w}$ on the output derivative channels:
\begin{equation*}
\begin{gathered}
\boldsymbol{\Gamma}(\boldsymbol{x}) = - \begin{bmatrix} 
L_{Q_1}L_{\boldsymbol{f}}H_1 & L_{Q_2}L_{\boldsymbol{f}}H_1 & \cdots & L_{Q_6}L_{\boldsymbol{f}}H_1 \\ 
L_{Q_1}L_{\boldsymbol{f}}H_2 & L_{Q_2}L_{\boldsymbol{f}}H_2 & \cdots & L_{Q_6}L_{\boldsymbol{f}}H_2 
\end{bmatrix} \hfill
\end{gathered}
\end{equation*}
This configuration ensures that the effects of $\boldsymbol{w}$ are mitigated in the output channels during steady-state operation. The feedback gains $c_1^i$ and $c_2^i$ are determined by assigning the closed-loop poles of the $i$-th loop to $-\nu_{ci}$, where $\nu_{ci}$ denotes the designated controller bandwidth \cite{gao2003scaling}.

For clarity, the derivation of the first row of $\boldsymbol{\Gamma}(\boldsymbol{x})$ is illustrated using the first subsystem. By substituting the control law \eqref{eq: decoupling_control_law} into the system dynamics \eqref{eq: normal_form_subsystem}, the closed-loop representation for the first subsystem is obtained as:
\begin{equation}\label{eq: closed_loop_first_subsystem}
\begin{gathered}
\begin{bmatrix} \dot{\xi}_1 \\ \dot{\xi}_2 \end{bmatrix} = \begin{bmatrix} 0 & 1 \\ -c_1^1 & -c_2^1 \end{bmatrix} \begin{bmatrix} \xi_1 \\ \xi_2 \end{bmatrix} + \begin{bmatrix} 0 \\ 1 \end{bmatrix} c_1^1 Z_{\text{ref}} \hfill \\
+ \begin{bmatrix} \sum_{i=1}^{6} L_{Q_i}H_1 w_i \\ \sum_{i=1}^6 L_{Q_i}L_{\boldsymbol{f}}H_1 w_i \end{bmatrix} + \begin{bmatrix} 0 \\ \sum_{i=1}^6 \gamma_{1i}w_i \end{bmatrix} \hfill \\
y_1 = \begin{bmatrix} 1 & 0 \end{bmatrix} \begin{bmatrix} \xi_1 \\ \xi_2 \end{bmatrix} \hfill
\end{gathered}
\end{equation}

To ensure that the output $y_1$ remains unaffected by the disturbances during steady-state operation, the influence of the final two terms in \eqref{eq: closed_loop_first_subsystem} must be neutralized. This requirement is formulated as:
\begin{equation}\label{eq: steady_state_output_condition}
\begin{gathered}
\begin{bmatrix} 1 & 0 \end{bmatrix} \begin{bmatrix} 0 & 1 \\ -c_1^1 & -c_2^1 \end{bmatrix}^{-1} \begin{bmatrix} \sum_{i=1}^6 L_{Q_i}H_1 w_i \\ \sum_{i=1}^6 (L_{Q_i}L_{\boldsymbol{f}}H_1 + \gamma_{1i}) w_i \end{bmatrix} = 0 \hfill
\end{gathered}
\end{equation}

Solving \eqref{eq: steady_state_output_condition} for $\gamma_{1i}$ yields the specific decoupling gains required for each disturbance channel:
\begin{equation}\label{eq: gamma_gain_derivation}
\begin{gathered}
\gamma_{1i} = -L_{Q_i}L_{\boldsymbol{f}}H_1 - c_2^1 L_{Q_i}H_1, \quad \forall i=1, \dots, 6 \hfill
\end{gathered}
\end{equation}

\subsection{Inner-loop Attack-Resilient Tracking Problem Formulation}

Following the derivation of the decoupling gains in \eqref{eq: gamma_gain_derivation}, it is essential to consider the vulnerability of the baseline pseudo control component $\bm{u}_0$ to external cyberattacks. In this study, we investigate a scenario where the optimized pseudo-control input, intended to be the command $\bm{u}^*_0$, is targeted by an adversarial FDIA, denoted as $\boldsymbol{\delta}(t)$. Consequently, the actual control input that enters the system dynamics is the corrupted signal:
\begin{equation}\nonumber
    \bar{\bm{u}}^*_0 = \bm{u}^*_0 + \boldsymbol{\delta}
\end{equation}
where $\bm{u}^*_0$ is the resilient control law to be designed. While $\bar{\bm{u}}^*_0$ represents the corrupted signal effectively reaching the actuators, the proposed framework leverages its adaptive compensation capability to neutralize the adverse effects of $\boldsymbol{\delta}$. To ensure the analytical tractability of the system resilience, we establish the following condition regarding the attack signal:

\begin{assumption}
\label{ass: attack_structure}
The FDI attack signals denoted by $\delta_i(t)$ are exponentially unbounded. Their growth characteristics are constrained by a dynamic envelope defined as:
\[
\|\delta_i(t)\| \leqslant\exp(\bar\kappa_i t),
\]
where $\bar\kappa_i$ is a positive scalar constant.
\end{assumption}
The following definition facilitates the subsequent analysis. 

\begin{definition}[UUB\cite{khalil2002nonlinear}]
\label{def: UUB}
The signal $x(t)\in {\mathbb{R}^n}$ is said to be uniformly ultimately bounded (UUB) with the ultimate bound $b$, if there exist positive constants $b$ and $c$, independent of ${t_0} \geqslant 0$, and for every $a \in \left( {0,c} \right)$, there is $T = T\left( {a,b} \right) \geqslant 0$, independent of $t_0$, such that
\begin{equation}
\label{eq12}
\left\| {x\left( {{t_0}} \right)} \right\| \leqslant a\;\; \Rightarrow \;\;\left\| {x\left( t \right)} \right\| \leqslant b,\forall t \geqslant {t_0} + T
\end{equation}
\end{definition}

\begin{problem}[Inner-loop Attack-Resilient Tracking]
\label{pro: attack_resilient_tracking}
The objective is to design the optimized resilient pseudo-control vector $\boldsymbol{u}^*_0 = [u_{01}^*, u_{02}^*]^\top$ such that the tracking error states are regulated to a stable neighborhood of the equilibrium point in the sense of UUB stability.
\end{problem}

To mitigate the adverse effects of the structured adversarial signals defined in Assumption \ref{ass: attack_structure}, which can induce significant trajectory deviations or lead to catastrophic system failure if left uncompensated, we propose an optimization-based approach. The following section details the design of the RCLF-QP control framework. By integrating a resilient compensational term within a constrained QP solver, the framework dynamically counteracts malicious attack injections to maintain system stability and tracking performance.

\section{RCLF-QP Control Framework Design}\label{sec: resilient_qp}
Building upon the attack model and the constraints established in Assumption \ref{ass: attack_structure}, this section details the design of the RCLF-QP control framework. The primary objective of this framework is to design RCLF, where a resilient compensational term is developed to compensate for the maliciously injected unknown and unbounded FDIA with real-time online adaptation. By formulating RCLF as an affine constraint within a QP-based optimization problem, the framework provides a systematic and extensible resilient layer that effectively neutralizes the impact of adversarial FDIA injections.

The inner-loop tracking objective is addressed by decomposing the system into two autonomous tracking-error subsystems for altitude and reduced attitude regulation. We define the altitude error vector as $\bm{e}_z = [e_1, e_2]^\top = [Z - Z_{\text{ref}}, V_z]^\top$ and the reduced-attitude error vector as $\bm{e}_a = [e_3, e_4]^\top = [y_2, \dot{y}_2]^\top$. The resulting state-space representations are given by:
\begin{equation}\label{eq: error_subsystems}
    \dot{\bm{e}}_z = \underbrace{\begin{bmatrix} e_2 \\ 0 \end{bmatrix}}_{\bm{f}_{e,z}} + \underbrace{\begin{bmatrix} 0 \\ 1 \end{bmatrix}}_{\bm{g}_{e,z}} u_{01}^*, \quad
    \dot{\bm{e}}_a = \underbrace{\begin{bmatrix} e_4 \\ 0 \end{bmatrix}}_{\bm{f}_{e,a}} + \underbrace{\begin{bmatrix} 0 \\ 1 \end{bmatrix}}_{\bm{g}_{e,a}} u_{02}^*
\end{equation}
where $u_{01}^*$ and $u_{02}^*$ denote the optimized pseudo-control components determined by the RCLF-QP control framework.

To evaluate the stability of the tracking-error dynamics, we define two respective CLFs, $W_z$ and $W_a$, using a generalized quadratic form:
\begin{equation}\label{eq: generalized_clfs}
    W_z(\bm{e}_z) = \bm{e}_z^\top P_z \bm{e}_z, \qquad W_a(\bm{e}_a) = \bm{e}_a^\top P_a \bm{e}_a
\end{equation}
where $P_i \in \mathbb{R}^{2 \times 2}$ is a symmetric positive-definite matrix. 
The non-vanishing nature of $L_g W_i$ ensures that the optimized control inputs can directly regulate the energy dissipation rate along the stability boundary.

To account for adversarial influences caused by the malicious FDIA, an adaptive gain $\rho_i(t)$ is introduced for each channel. The real-time online evolution of this gain is governed by the following adaptation law (since the two subsystems in \eqref{eq: error_subsystems} are similar , for brevity, the following analysis is applied to both subsystems):
\begin{equation}\label{eq: adaptive_gain_law}
    \dot{\rho}_i(t) = \theta_i \|L_g W_i(\bm{e}_i)\|
\end{equation}
where $\theta_i > 0$ is the adaptation rate. Using these gains, we develop the following RCLF constraints:
\begin{equation}\label{eq: resilient_constraints_definition}
\begin{gathered}
    L_g W_z u_{01}^* \leqslant -C_z W_z - L_f W_z - \mathcal{G}_1(t) \triangleq b_1 \hfill \\
    L_g W_a u_{02}^* \leqslant -C_a W_a - L_f W_a - \mathcal{G}_2(t) \triangleq b_2 \hfill
\end{gathered}
\end{equation}
where the resilient compensational term $\mathcal{G}_i(t)$ is introduced to compensate for the adverse effects caused by the unknown and unbounded FDIA:
\begin{equation}\label{eq: compensation_term}
    \mathcal{G}_i(t) = \frac{(L_g W_i)(L_g W_i)^\top}{\|L_g W_i\| + e^{-\alpha t}} e^{ \rho_i(t)}
\end{equation}

The final optimized pseudo-control vector $\bm{u_0}^* = [u_{01}^*, u_{02}^*]^\top$ is obtained by solving a QP problem strictly adhering to the resilient stability constraint:
\begin{equation}\label{eq: clf_qp_optimization}
\begin{aligned}
    \min_{\bm{u_0}^* \in \mathbb{R}^2} \quad & \|\bm{u_0}^*-\bm{u}_0\|^2 \\
    \text{s.t.} \quad & \begin{bmatrix} L_g W_z & 0 \\ 0 & L_g W_a \end{bmatrix} \bm{u_0}^* \leqslant \begin{bmatrix} b_1 \\ b_2 \end{bmatrix}.
\end{aligned}
\end{equation}

\begin{theorem}
\label{thm: resilient_stability_guarantee}
Given Assumption \ref{ass: attack_structure}, and considering the quadrotor dynamical model \eqref{eq: nonlinear_system_dynamics} in the presence of lumped disturbances and adversarial attacks on the pseudo-control input channel $\boldsymbol{u}_0$, the inner-loop attack-resilient tracking problem is solved by the RCLF-QP \eqref{eq: clf_qp_optimization}. That is, Problem \ref{pro: attack_resilient_tracking} is solved.
\end{theorem}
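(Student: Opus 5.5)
The plan is a Lyapunov argument on each tracking-error subsystem separately. Under attack, the pseudo-control that actually enters is $\bar u_{0i}^* = u_{0i}^* + \delta_i$, so the error dynamics \eqref{eq: error_subsystems} become $\dot{\bm e}_i = \bm f_{e,i} + \bm g_{e,i}(u_{0i}^*+\delta_i)$. The mismatched disturbance $\boldsymbol{D}_c\boldsymbol{w}$ is handled by the MB-ESO and the decoupling law \eqref{eq: decoupling_control_law}, as in \cite{chen2024quadrotor}. Any residual estimation error is treated as a bounded term that can be absorbed into the ultimate bound.

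\textbf{Step 1 (feasibility).} The QP \eqref{eq: clf_qp_optimization} is decoupled, so it has a closed-form solution. If $L_gW_i\neq 0$, the constraint is a half-space in $u_{0i}^*$ and is always feasible. If $L_gW_i=0$, then $\mathcal G_i=0$, and the constraint reduces to $0\le -C_iW_i - L_fW_i$. This case needs an argument. For $P_i$ solving a Lyapunov-type design (e.g. $P_i$ chosen so that $\bm e_i^\top P_i$ times the double-integrator drift is $\le -C_i W_i$ wherever $\bm e_i^\top P_i \bm g=0$), the CLF property holds and the constraint is satisfied. I would state this as the standard CLF condition on $P_i$.

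\textbf{Step 2 (dissipation under attack).} Along the attacked dynamics,
\[
\dot W_i = L_fW_i + L_gW_i\,u_{0i}^* + L_gW_i\,\delta_i \le -C_iW_i - \mathcal G_i(t) + \|L_gW_i\|\,e^{\bar\kappa_i t},
\]
using the QP constraint and Assumption \ref{ass: attack_structure}. Write $s=\|L_gW_i\|$. Then $\mathcal G_i = \frac{s^2}{s+e^{-\alpha t}}e^{\rho_i}$, and the key inequality is
\[
s\,e^{\bar\kappa_i t} - \frac{s^2}{s+e^{-\alpha t}}e^{\rho_i}
= \frac{s\,e^{\bar\kappa_i t}\bigl(s+e^{-\alpha t}\bigr) - s^2 e^{\rho_i}}{s+e^{-\alpha t}}.
\]
Whenever $e^{\rho_i(t)}\ge e^{\bar\kappa_i t}$, this is at most $\frac{s\,e^{(\bar\kappa_i-\alpha)t}}{s+e^{-\alpha t}}\cdot e^{-\alpha t}\cdot e^{\alpha t}\le e^{(\bar\kappa_i-\alpha)t}$, which is bounded provided $\alpha\ge\bar\kappa_i$.

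\textbf{Step 3 (the adaptive gain overtakes the attack envelope).} This is the main obstacle. The adaptation law \eqref{eq: adaptive_gain_law} makes $\rho_i$ nondecreasing, but it grows only while $s\neq0$. I would argue by contradiction. Suppose $\rho_i(t)<\bar\kappa_i t$ persisted on some long interval. Since $\dot\rho_i=\theta_i s$, the time integral of $s$ would stay below roughly $\bar\kappa_i t/\theta_i$. Meanwhile the bound in Step 2 would let $W_i$, and hence $s=\|2\bm e_i^\top P_i\bm g\|$, grow like $e^{\bar\kappa_i t}$. That growth would force $\int s$ to exceed the linear bound, giving the contradiction. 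Making this rigorous requires tying the growth of $W_i$ to the growth of $s$; the Step 1 CLF property handles the directions where $s$ is small. After some finite $T_1$ we then have $\rho_i(t)\ge\bar\kappa_i t$. Before $T_1$, all signals stay finite because finite-time escape is excluded on compact intervals.

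\textbf{Step 4 (UUB).} For $t\ge T_1$, Steps 2 and 3 give $\dot W_i\le -C_iW_i + \epsilon_i$, where $\epsilon_i$ is a constant bound (or a decaying term if $\alpha>\bar\kappa_i$) that also absorbs the residual MB-ESO error. By the comparison lemma, $W_i(t)\le e^{-C_i(t-T_1)}W_i(T_1)+\epsilon_i/C_i$. Using $\lambda_{\min}(P_i)\|\bm e_i\|^2\le W_i$, the error converges to the ball $\|\bm e_i\|\le\sqrt{\epsilon_i/(C_i\lambda_{\min}(P_i))}$ within a time independent of $t_0$, which is UUB in the sense of Definition \ref{def: UUB}. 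Applying this to both $\bm e_z$ and $\bm e_a$ solves Problem \ref{pro: attack_resilient_tracking}.
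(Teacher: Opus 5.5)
Your Steps 1, 2 and 4 follow the paper's route: substitute the RCLF constraint into $\dot W_i$, put $L_gW_i\delta_i-\mathcal{G}_i$ over the denominator $\|L_gW_i\|+e^{-\alpha t}$, then argue domination and an ultimate bound. They also tighten it. The paper never checks feasibility where $L_gW_i=0$. Its claim that $e^{-\alpha t}\|\delta_1\|\to 0$ tacitly needs $\alpha>\bar\kappa_1$. And the comparison lemma is the right tool for this time-varying bound, not LaSalle's principle.

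The gap is Step 3, whose conclusion is false. The attack $\delta_i\equiv 0$ satisfies Assumption~\ref{ass: attack_structure}. For it the QP constraint gives $\dot W_i\leq -C_iW_i-\mathcal{G}_i\leq -C_iW_i$, so $s=\|L_gW_i\|$ decays exponentially, $\rho_i$ converges to a finite limit, and $\rho_i(t)<\bar\kappa_i t$ for all large $t$. Your contradiction breaks where the Step 2 bound is said to make $W_i$ grow like $e^{\bar\kappa_i t}$. An upper bound on $\dot W_i$ permits growth but never forces it, so a slowly growing $\rho_i$ contradicts nothing, and Step 4 is left without a $T_1$. The trouble starts in Step 2, where you replace $\|\delta_i\|$ by its envelope before comparing it with $e^{\rho_i}$. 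The paper instead compares $e^{\rho_1}$ with the actual $\|\delta_1(t)\|$, and only outside $\Upsilon_z=\{\|L_gW_z\|\leq\epsilon_1\}$, where $\rho_1$ grows at rate at least $\theta_1\epsilon_1$.

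Even that weaker domination cannot hold for all $t$ beyond a fixed time, and the paper merely asserts the existence of $T^*$, so it does not supply the missing step either. If the attack vanishes on a window $[a,a+L]$, then $\dot W_i\leq -C_iW_i$ there, and $\rho_i$ increases by at most a constant times $\sqrt{W_i(a)}$, independently of $L$. If the attack then jumps to its envelope $e^{\bar\kappa_i(a+L)}$, the inequality $e^{\rho_i}<\|\delta_i\|$ reappears, and this can be repeated at arbitrarily late times. What a proof really needs is a bound on the excursion of $\bm{e}_i$ during each such lag phase that is uniform in its onset time.

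Your idea of tying $W_i$ to $s$ is the right one for the set $\{\|L_gW_i\|\leq\epsilon_i\}$, which is a slab, not the compact set the paper calls it. Write $\bm{e}_i=[e_1,e_2]^\top$ and $P_i=[p_{jk}]$. Then $L_gW_i=2z$ with $z=p_{12}e_1+p_{22}e_2$, and $\dot e_1=-(p_{12}/p_{22})e_1+z/p_{22}$. Your Step 1 condition is equivalent to $p_{12}/p_{22}\geq C_i/2$, so bounded $s$ does give bounded $\bm{e}_i$.

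A rough estimate suggests, however, that the uniform excursion bound fails. Suppose $\rho_i$ lags $\ln\|\delta_i\|$ by a fixed amount at a burst onset $t_k$. Then $e_2$ grows at rate about $e^{\bar\kappa_i t_k}$, while $\rho_i$ gains only about $\theta_i p_{22}e^{\bar\kappa_i t_k}\tau^2$ after time $\tau$. So $e_2$ peaks at order $e^{\bar\kappa_i t_k/2}/\sqrt{\theta_i p_{22}}$ before the compensation catches up, and the quiet windows can be chosen to keep the lag fixed. Closing Step 3 therefore most likely needs an extra assumption on the attack or a modified adaptation law, not just a more careful contradiction argument.
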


\noindent\textbf{Proof of Theorem~\ref{thm: resilient_stability_guarantee}.}
To establish the stability of the inner-loop tracking under the RCLF-QP framework, we analyze the altitude error subsystem $\bm{e}_z$ and the reduced-attitude subsystem $\bm{e}_a$. Without loss of generality, we present the proof for the altitude subsystem using the Lyapunov function candidate $W_z = \bm{e}_z^\top P_z \bm{e}_z$ defined in \eqref{eq: generalized_clfs}. The analysis for the attitude subsystem follows an identical procedure.

The time derivative of $W_z$ along the trajectories of the closed-loop error dynamics \eqref{eq: error_subsystems} is given by:
\begin{equation}
\begin{gathered}
    \dot{W}_z = \nabla W_z \cdot \dot{\bm{e}}_z = L_f W_z + L_g W_z \bar{u}_{01}^*,
\end{gathered}
\end{equation}
where $\bar{u}_{01}^* = u_{01}^* + \delta_1(t)$ denotes the actual corrupted control input, with $\delta_1(t)$ representing the altitude component of the adversarial attack signal $\boldsymbol{\delta}(t)$. Substituting the corrupted input yields:
\begin{equation}
\begin{gathered}
    \dot{W}_z = L_f W_z + L_g W_z u_{01}^* + L_g W_z \delta_1.
\end{gathered}
\end{equation}
The RCLF-QP optimization in \eqref{eq: clf_qp_optimization} enforces the following RCLF constraint:
\begin{equation}
\begin{gathered}
    L_g W_z u_{01}^* \leqslant -C_z W_z - L_f W_z - \mathcal{G}_1(t).
\end{gathered}
\end{equation}
Substituting this constraint into the time derivative of $W_z$, we obtain:
\begin{equation}
\label{eq: Wz_dot_inequality}
\begin{gathered}
    \dot{W}_z \leqslant -C_z W_z + L_g W_z \delta_1 - \mathcal{G}_1(t).
\end{gathered}
\end{equation}
Substituting the explicit form of the resilient compensation term $\mathcal{G}_1(t)$ from \eqref{eq: compensation_term} into the inequality, we have:
\begin{equation}
\scalebox{0.9}{$
\label{eq: proof_manipulation_step1}
\begin{gathered}
    L_g W_z \delta_1 - \mathcal{G}_1(t) = L_g W_z \delta_1 - \frac{\|L_g W_z\|^2}{\|L_g W_z\| + e^{-\alpha t}} e^{\rho_1(t)} \\
    \leqslant \|L_g W_z\| \|\delta_1\| - \frac{\|L_g W_z\|^2}{\|L_g W_z\| + e^{-\alpha t}} e^{\rho_1(t)} \\
    = \frac{\|L_g W_z\| \left( \|L_g W_z\| \|\delta_1\| + e^{-\alpha t}\|\delta_1\| \right) - \|L_g W_z\|^2 e^{\rho_1(t)}}{\|L_g W_z\| + e^{-\alpha t}} \\
    = \frac{\|L_g W_z\| \left( \|L_g W_z\| \left( \|\delta_1\| - e^{\rho_1(t)} \right) + e^{-\alpha t}\|\delta_1\| \right)}{\|L_g W_z\| + e^{-\alpha t}}.
\end{gathered}
$}
\end{equation}
To ensure that $\dot{W}_z < 0$ outside a compact set, we require the term $L_g W_z \delta_1 - \mathcal{G}_1(t)$ to be non-positive. This condition is satisfied if the numerator in \eqref{eq: proof_manipulation_step1} is non-positive, which implies:
\begin{equation}
\begin{gathered}
    \|L_g W_z\| \left( e^{\rho_1(t)} - \|\delta_1\| \right) \geqslant e^{-\alpha t} \|\delta_1\|.
\end{gathered}
\end{equation}
The adaptive gain $\rho_1(t)$ evolves according to $\dot{\rho}_1 = \theta_1 \|L_g W_z\|$. As long as tracking errors persist (i.e., $\|L_g W_z\| \neq 0$), $\rho_1(t)$ increases. Given the exponential nature of the compensation term $e^{\rho_1(t)}$ and considering Assumption~\ref{ass: attack_structure}, it will eventually dominate the growth of $\|\delta_1(t)\|$. Furthermore, the term $e^{-\alpha t}\|\delta_1(t)\|$ decays to zero as $t \to \infty$. Consequently, there exists a time $T^* > 0$ such that for all $t > T^*$, that is when $\theta_1 \|L_g W_z\|>\bar\kappa_i$, the condition $e^{\rho_1(t)} > \|\delta_1(t)\|$ holds.   . We define the compact sets $\Upsilon_z \equiv \{ \bm{e}_z : \|L_g W_z\| \leqslant \epsilon_1 \}$, where $\epsilon_1$ represent small positive constants related to the residual errors during the convergence phase.

Outside the set $\Upsilon_z$, and for $t > T^*$, the term $L_g W_z \delta_1 - \mathcal{G}_1(t)$ becomes negative, yielding:
\begin{equation}
\begin{gathered}
    \dot{W}_z \leqslant -C_z W_z < 0.
\end{gathered}
\end{equation}
Consequently, $\dot{W}_z$ is strictly negative-definite outside the compact set $\Upsilon_z$ for all $t > T^*$. By invoking LaSalle's invariance principle \cite{krstic1995nonlinear}, we conclude that the error trajectories $\bm{e}_z$ (and similarly $\bm{e}_a$) are UUB. Thus, the proposed RCLF-QP framework guarantees that the system states are regulated to a stable neighborhood of the equilibrium despite the presence of unbounded FDIA. This completes the proof. $\hfill\blacksquare$    


\section{Simulation Results}\label{sec: simulation}

The effectiveness of the proposed resilient control architecture is evaluated through numerical simulations. To highlight the robustness provided by the integrated MB-ESO and RCLF-QP control framework, we compare our methodology with the baseline pseudo control law $u_0$ described in \cite{sun2021incremental, yang2012nonlinear}. This comparison specifically examines the system's ability to maintain stability when subjected to malicious signal injections and lumped environmental disturbances.
The simulation is conducted using the physical parameters of the quadrotor detailed in Table \ref{tab: parameters}. To test the resilience of the framework, an adversarial attack signal $\boldsymbol{\delta}$ is injected into the pseudo-control input channel. This attack is designed as an unbounded signal intended to compromise the inner-loop stability.

\begin{table}[htbp]
    \caption{Physical Parameters of the Quadrotor}
    \label{tab: parameters}
    \centering
    \small
    \begin{tabular}{@{}cc@{\hspace{4mm}}cc@{}}
        \toprule
        Parameter & Value & Parameter & Value\\
        \midrule
        \(I_x\) & \(1.4 \times 10^{-3}\) kg\(\cdot\)m\textsuperscript{2} & \(m_v\) & \(0.375\) kg \\
        \(I_y\) & \(1.3 \times 10^{-3}\) kg\(\cdot\)m\textsuperscript{2} & \(b\)   & \(0.115\) m \\
        \(I_z\) & \(2.5 \times 10^{-3}\) kg\(\cdot\)m\textsuperscript{2} & \(l\)   & \(0.0875\) m \\
        \(\kappa_0\) & \(1.9 \times 10^{-6}\) kg\(\cdot\)m/rad\textsuperscript{2} & \(g\) & \(9.81\) m/s\textsuperscript{2}\\
        \(\tau_0\) & \(1.9 \times 10^{-8}\) kg\(\cdot\)m\textsuperscript{2}/rad\textsuperscript{2} & \(\gamma\) & \(1.92 \times 10^{-3}\) N\(\cdot\)m\(\cdot\)s \\
        \(\chi\) & 105\(^\circ\) &   &   \\
        \bottomrule
    \end{tabular}
\end{table}

The operational constraints include a maximum tilt angle of 60 degrees and rotor speed bounds between 0 rad/s and 2000 rad/s. Sensor measurements incorporate Band-Limited White Noise with power values of \(1\times 10^{-8}\), \(1\times 10^{-8}\), \(2\times 10^{-4}\), and \(1\times 10^{-5}\) for position, attitude, angular velocity, and rotor speed, respectively. The specific bandwidths and gains are adopted from Table II in \cite{chen2024quadrotor}, whereas the baseline pseudo controller $u_0$ utilizes identical gains but operates without the RCLF-QP mechanism.
For the simulation scenario presented here, a ``gate" reference trajectory is considered for the horizontal position, where the desired position transitions from zero to a constant setpoint of $1$\,m at $t=4\,\mathrm{s}$ and subsequently returns to zero at $t=8\,\mathrm{s}$. The altitude reference is held constant at $Z_{ref} = 0\,\mathrm{m}$. Starting at $t=10\,\mathrm{s}$, an unbounded FDIA is injected. The attack signal applied to rotor~1 is $\delta_1(t)=0.5t$, and the one to the opposing rotor is $\delta_2(t)=0.75t$.
\subsection{Trajectory Tracking and Reduced Attitude}

The tracking performance under this adversarial attack is illustrated in Fig.~\ref{fig: trajectory_comparison}. As shown in the first three subplots of Fig.~\ref{fig: trajectory_comparison}, the quadrotor successfully tracks the gate reference in the $X$ and $Y$ axes and maintains the desired altitude $Z$ prior to the attack. When subjected to the unbounded FDIAs at $t=10\,\mathrm{s}$, the baseline pseudo controller $u_0$ (blue dotted line) fails to maintain tracking performance. The lack of resilience-enforcing constraints allows the adversarial input to dominate the closed-loop error dynamics, resulting in significant divergence. In contrast, the proposed RCLF framework (red solid line) preserves stable and accurate tracking by incorporating the resilient control Lyapunov function within the QP-based optimization.

The bottom subplot of Fig.~\ref{fig: trajectory_comparison} displays the reduced attitude output $y_2$. In the reduced attitude control framework adopted for failure conditions, driving $y_2 \to 0$ is critical for ensuring the internal zero dynamics remain stable. The proposed RCLF framework successfully maintains $y_2$ within a narrow bounded error band near zero, even after the attack is triggered. While $y_2$ diverges using the baseline controller (blue line).


The resulting spatial path is visualized in Fig.~\ref{fig: 3d_trajectory}. The 3D plot illustrates the resilience of the proposed controller under FDIA. While the baseline trajectory exhibits unbounded divergence leading to a total system crash, the RCLF-QP framework maintains stability. Although a residual tracking error is observable during the attack, the controller effectively prevents the catastrophic drift seen in the baseline case, maintaining a stable hover near an altitude of $0.2\,\mathrm{m}$.

\begin{figure}[!htbp]
    \centering
    \includegraphics[width=0.95\linewidth]{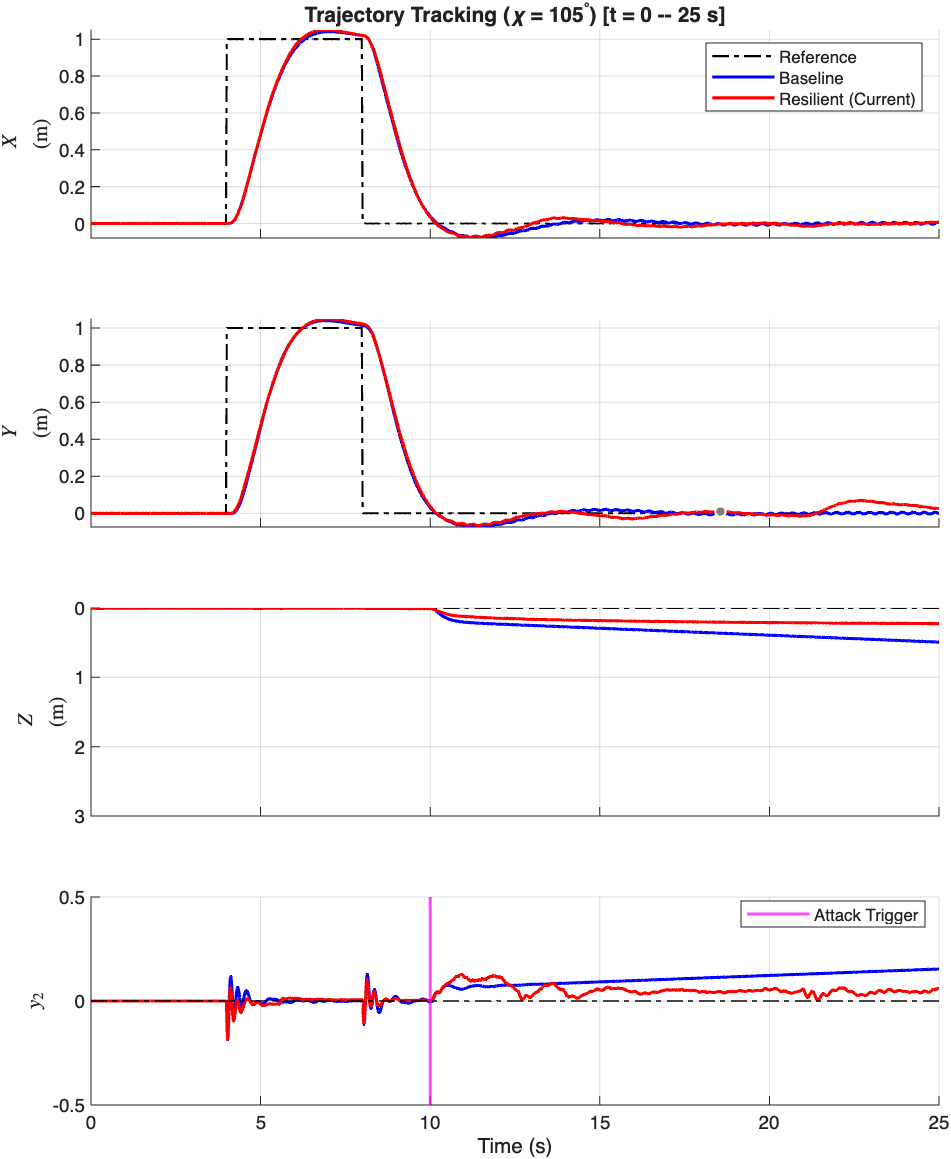}
    \caption{Trajectory tracking comparison for $X$, $Y$, $Z$ positions and the reduced attitude output $y_2$. The unbounded attack is triggered at $t=10\,\mathrm{s}$, where the attack signals applied to rotor~1 and the opposing rotor~3 are $\delta_1(t)=0.5t$ and $\delta_2(t)=0.75t$, respectively.}
    \label{fig: trajectory_comparison}
\end{figure}

\begin{figure}[!htbp]
    \centering
    \includegraphics[width=0.95\linewidth]{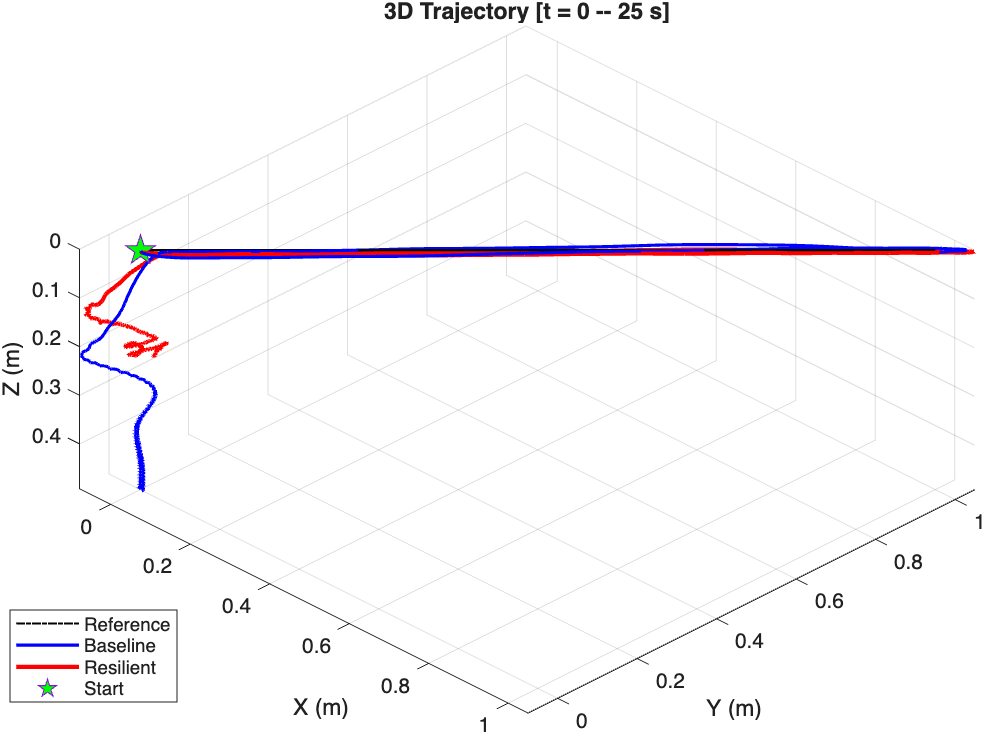}
    \caption{3D flight trajectories under adversarial attack. Comparison between the baseline controller and the resilient RCLF-QP controller.}
    \label{fig: 3d_trajectory}
\end{figure}

\section{Conclusion}
This letter has presented a resilient control architecture to secure quadrotor's operations against three types of threats: actuator failures, lumped external disturbances caused by aerodynamics and strong wind, and malicious cyberattacks. A RCLF-QP framework has been developed to mitigate the adverse effects caused by unknown and unbounded cyberattacks online in real-time. Unlike conventional PD feedback control, such as in \cite{chen2024quadrotor}, the proposed optimization-based approach offers a systematic and extensible framework. By formulating the QP problem with resilient components as hard constraints against unknown and unbounded attack signals, the architecture maintains rigorous stability guarantees while allowing for seamless integration of new control objectives. High-fidelity simulations demonstrate that the RCLF-QP mechanism effectively prevents trajectory divergence and system failure in scenarios where baseline methods succumb to attacks. Ultimately, this work provides a foundation for enhancing the autonomy and security of unmanned aerial systems in increasingly adversarial environments.



 


\bibliographystyle{asmejour}   

\bibliography{asmejour-sample} 



\end{document}